\documentclass[letterpaper,10pt,conference]{ieeeconf} 

\IEEEoverridecommandlockouts                              

\usepackage{amsmath}
\usepackage{amssymb}

\usepackage{amsthm}
\usepackage{graphicx}
\usepackage{empheq}
\usepackage[caption=false,font=footnotesize]{subfig}
\usepackage{array}

\newtheorem{definition}{Definition}
\newtheorem{assumption}{Assumption}
\newtheorem{theorem}{Theorem}

\newtheorem{corollary}{Corollary}

\newtheorem{problem}{Problem}

\newcommand{\Ag}{\mathcal{I}}          
\newcommand{\Ctr}{\mathcal{C}}         
\newcommand{\Asm}{\mathcal{A}}         
\newcommand{\Gua}{\mathcal{G}}         
\newcommand{\refines}{\preceq}

\title{\LARGE \bf
Contract-Based Decomposition of Temporal Logic Specifications for Networked Systems under Arbitrary Partitions
}

\author{Kodai Kanno$^{1}$, Kenta Hoshino$^{2}$, and Takeshi Hatanaka$^{1}$
\thanks{$^{1}$Kodai Kanno and Takeshi Hatanaka are with the Department of Systems and Control Engineering, Institute of Science Tokyo, Tokyo 152-8550, Japan (e-mail: kanno.k@hfg.sc.e.titech.ac.jp, hatanaka@sc.eng.isct.ac.jp).}%
\thanks{$^{2}$Kenta Hoshino is with the School of Computing, Institute of Science Tokyo, Tokyo 152-8550, Japan, and with DENSO IT Laboratory, Tokyo 105-0004, Japan (e-mail: hoshino@comp.isct.ac.jp).}%
}

\begin{document}

\maketitle
\thispagestyle{empty}
\pagestyle{empty}

\begin{abstract}
Computational complexity is an inherent limitation of formal synthesis for networked systems, and decomposing the global specification into local ones relaxes this limitation at the cost of conservatism.
Since the granularity of the partition governs this trade-off, it is reasonable to treat the partition as a design variable, which calls for local specifications that remain correct for every partition.
To this end, this paper gives each agent a local specification, written as an assume-guarantee contract that the agent can establish from local information.
We first derive a necessary and sufficient condition for these contracts to decompose the global specification under a given partition.
Building on this, we then present a condition under which the decomposition is correct for every partition, so that the partition becomes a free design variable.
For linear dynamics and signal temporal logic formulas with affine predicates, we further synthesize a controller for each coalition by a tube-based approach.
Finally, simulations on a network of input-coupled tanks show how the choice of partition trades computational cost against conservatism.
\end{abstract}

\section{INTRODUCTION}
Formal methods provide rich specification languages that connect software requirements with physical systems, making it possible to ensure the reliability of cyber-physical systems.
A major challenge in formal synthesis for networked systems is computational complexity \cite{Lindemann2025-qt}.
For instance, formal synthesis for signal temporal logic (STL) specifications can be cast as a mixed-integer program whose number of binary variables grows with the number of predicates and the horizon length \cite{Belta2019-ve}, which increases the worst-case complexity exponentially and hinders application to large-scale networked systems.

One remedy for the above scalability limitation is to decompose a complex specification of a high-dimensional system into smaller, more tractable specifications on subsystems \cite{Benveniste2018-pp}.
Several works formalize the decomposition of specifications through assume-guarantee contracts, thereby making formal synthesis applicable to heterogeneous networked systems such as power systems \cite{Chen2021-mx}, autonomous robots in a plug-and-play setting \cite{Bodmer2026-xc}, energy systems \cite{Nuzzo2014-mz}, and the design of autonomous driving systems \cite{Incer2025-ac}.
This decomposition, however, increases conservatism, because each controller has less information about the other subsystems and has to guarantee its specification against a correspondingly enlarged uncertainty \cite{Sadraddini2017-am}.
This motivates treating the partition as a design variable that balances conservatism against computational cost.

This design freedom calls for a decomposition that is \emph{correct} for every partition, in the sense that the network meets the global specification whenever every local specification is met.
One of the main obstacles is \emph{circularity}, which arises when agents make assumptions about one another and can admit behaviors in which each agent violates its guarantee on the grounds that the other has violated its assumption \cite{Abadi1995-ma}. Sadraddini et al. \cite{Sadraddini2017-am} and Ghasemi et al. \cite{Ghasemi2022-xj} avoid this problem by designing contracts and controllers jointly, forcing each controller to meet the assumptions made by the others.
These frameworks first compute a feasible solution of a mixed-integer program over the grand coalition, which assigns each agent an admissible set that serves as the assumption for its neighbors, and then synthesize local controllers that keep each agent inside it.
Finding such a solution, however, becomes intractable as the network size and the horizon grow.
Furthermore, the solution has to be recomputed whenever the partition changes.

Another approach to breaking circularity requires implementations to maintain their guarantees one time step beyond the horizon over which their assumptions hold \cite{Abadi1995-ma,Ozay2011-rz}, and Saoud et al. \cite{Saoud2021-md} give an analogous condition for continuous-time systems.
This approach, however, relies on controllers that satisfy this stronger property. 
Systematic constructions of such controllers have been proposed for restricted classes of STL formulas \cite{Liu2025-jb}, but to our knowledge no method covers general specifications expressible in temporal logic, which makes it difficult to treat the partition as a free design variable.

In this paper, we give the specification of each agent as an assume-guarantee contract designed in a distributed manner, and we characterize when such contracts correctly decompose the global specification.
This characterization is stated solely in terms of the contracts, without assuming any particular controller architecture.
First, we derive a necessary and sufficient condition for correctness under a given partition, which is the weakest requirement that the local specifications and the assumptions have to meet.
Next, building on this result, we derive a necessary and sufficient condition under which the decomposition is correct for every partition, so that the partition becomes a free design variable.
Then, for linear systems and STL formulas built from affine predicates, we design controllers that satisfy the contract assigned to each coalition.
Finally, simulations confirm the effectiveness of this design freedom.

\section{PRELIMINARIES}
\subsection{Agents, Coalitions, and Partitions}

Let $\Ag:=\{1,2,\dots,N\}$ be the set of agents.
Agent $i$ has an input $u_i[t]\in\mathbb{R}^{m_i}$ and a state $x_i[t]\in\mathbb{R}^{n_i}$.
The behavior of agent $i$ is the input-state trajectory pair
\begin{equation}
  w_i:=\big(u_i[0],\dots,u_i[T-1],x_i[0],\dots,x_i[T]\big),
  \notag
\end{equation}
where $T$ is a fixed horizon. 
The set of behaviors that agent $i$ can realize is denoted by $\mathbb{M}_i$.
The behavior of the whole network is $w:=(w_1,\dots,w_N)$, and the universe of behaviors is $\mathbb{M}:=\mathbb{M}_1\times\mathbb{M}_2\times\cdots\times\mathbb{M}_N$. We write $w[t]:=\big[u^\top_1[t],\dots,u^\top_N[t],x^\top_1[t],\dots,x^\top_N[t]\big]^\top$.

The agents are dynamically coupled, and this coupling is described by an undirected graph $G=(\Ag,E)$.
The set of neighbors of agent $i$ is denoted by $\mathcal{N}_i:=\{j\in\Ag \mid \{i,j\}\in E\}$.
This graph is determined by the plant, while which of the neighboring agents cooperate with one another is a design choice.
A \emph{coalition} \cite{Fele2017-dk} is a group of cooperating agents, formally a subset $C\subseteq\Ag$ inducing a connected subgraph of $G$, so that only agents linked by physical coupling can cooperate.
The set of neighbors of a coalition $C$ is $\mathcal{N}_C:=(\bigcup_{i\in C}\mathcal{N}_i)\setminus C$.
A \emph{partition} is a family $P:=\{C_1,C_2,\dots,C_K\}$ of coalitions satisfying $\bigcup_{C\in P}C=\Ag$ and $C_{k_1}\cap C_{k_2}=\emptyset$ for all $k_1\neq k_2$.

\subsection{Assume-Guarantee Contracts}

An \emph{assume-guarantee contract} \cite{Benveniste2018-pp} is a pair $\Ctr=(\Asm,\Gua)$ with $\Asm,\Gua\subseteq \mathbb{M}$.
Any implementation of the contract has the property that each of its behaviors lying in $\Asm$ also lies in $\Gua$.
Every contract is equivalent to its \emph{saturated} form, in which $\Gua$ is replaced by $\Gua\cup(\mathbb{M}\setminus\Asm)$.
Throughout this paper, all contracts are assumed to be saturated.

Let us next introduce the notions of \emph{refinement} and \emph{composition} of contracts.

\begin{definition}[Refinement \cite{Benveniste2018-pp}]
\label{def:refinement}
Let $\Ctr_1=(\Asm_1,\Gua_1)$ and $\Ctr_2=(\Asm_2,\Gua_2)$ be saturated contracts.
$\Ctr_2$ \emph{refines} $\Ctr_1$, written $\Ctr_2\refines\Ctr_1$, if $\Asm_2\supseteq\Asm_1$ and $\Gua_2\subseteq\Gua_1$.
\end{definition}

\begin{definition}[Composition \cite{Benveniste2018-pp}]
Let $\Ctr_l=(\Asm_l,\Gua_l)$, $l\in L$, be a finite family of saturated contracts.
Their \emph{composition} is the contract $\bigotimes_{l\in L}\Ctr_l=(\Asm,\Gua)$ with
\begin{equation}
  \Gua=\bigcap_{l\in L}\Gua_l,\qquad
  \Asm=\Big(\bigcap_{l\in L}\Asm_l\Big)\cup\Big(\mathbb{M}\setminus\bigcap_{l\in L}\Gua_l\Big).
  \label{eq:composition}
\end{equation}
\end{definition}
If a contract $\Ctr_2$ is a refinement of a contract $\Ctr_1$, every implementation of $\Ctr_2$ satisfies $\Ctr_1$. The composition of a family of contracts is the strongest contract satisfied by a system whose components meet them individually.
\subsection{Signal Temporal Logic}
\label{sec:stl}

Signal temporal logic (STL) \cite{Maler2004-nj} specifies properties of discrete-time signals.
It is built from predicates $\mu := \top \text{ (if } p(w[t]) \geq 0 \text{)}, \ \bot \text{ (if } p(w[t]) < 0 \text{)}$, where $p(w[t])\in\mathbb{R}$, according to the recursive grammar
\begin{equation}
\begin{split}
\varphi ::={}& \mu \mid \neg\varphi \mid \varphi_1\wedge\varphi_2
              \mid \varphi_1\vee\varphi_2\\
            &\mid F_{[a,b]}\varphi \mid G_{[a,b]}\varphi
              \mid \varphi_1 U_{[a,b]}\varphi_2 .
\end{split}\notag
\end{equation}
The symbols $\neg$, $\wedge$, and $\vee$ denote negation, conjunction, and disjunction. The symbols $F$, $G$, and $U$ are the temporal operators ``eventually'', ``always'', and ``until''.
We write $(w,t)\models\varphi$ when the signal $w$ satisfies $\varphi$ at time $t$.

Real-valued predicates allow one to define a robustness degree, which quantifies how strongly a signal satisfies a formula \cite{Donze2010-mp}.
For discrete-time systems, this value can be computed through a mixed-integer encoding \cite{Raman2014-rg}.
The encoding consists of inequalities relating a behavior $w$, the predicates, a real-valued variable $\rho^\varphi$, and binary variables $z^\varphi_t$ that indicate whether $\varphi$ is satisfied at time $t$.
We write $\mathcal{C}_\varphi$ for the constraints obtained by recursively applying these encodings.
The largest $\rho^\varphi$ with which $\mathcal{C}_\varphi$ and $z_0^{\varphi}=1$ hold is equal to the robustness degree of $w$ with respect to $\varphi$ \cite{Belta2019-ve}.
This value defines a function $\rho(\varphi,w)$ of the formula and the behavior, and $\rho(\varphi,w)\geq0$ implies $(w,0)\models\varphi$.

Suppose the predicates are affine, that is, $p_l(w[t])=c_l^\top w[t]+d_l$.
Let $w^{\mathrm{nom}}$ be a nominal behavior, and let $w[t]$ satisfy $w[t]-w^{\mathrm{nom}}[t]\in W$ at every time step.
Then the robustness degree of $w$ admits the lower bound \cite{Belta2019-ve}
\begin{equation}
  \rho(\varphi,w)\ \geq\ \rho(\varphi,w^{\mathrm{nom}})
  -\sup_{\delta w\in W}\big\|C_{\varphi} \delta w\big\|_\infty ,
  \label{eq:robust-bound}
\end{equation}
where $C_{\varphi}$ stacks the coefficient vectors $c_l^\top$ of the predicates.

\section{PROBLEM FORMULATION}
\label{sec:prob_formulation}
The dynamics of the network are given by the discrete-time linear model
\begin{equation}
  x[t+1]=Ax[t]+Bu[t],
  \label{eq:network}
\end{equation}
where the state and the input are the concatenations $x[t]:=[x_1^\top[t],\dots,x_N^\top[t]]^\top$, $u[t]:=[u_1^\top[t],\dots,u_N^\top[t]]^\top$ of the states and the inputs of the agents. The inputs are subject to the constraints $u_i[t]\in U_i$ for every $i\in\Ag$ and every $t$, where $U_i$ is convex and contains the origin.
The set $\mathbb{M}_i$ collects the behaviors $w_i$ whose inputs satisfy $u_i[t]\in U_i$ for every $t$ and whose states are generated by \eqref{eq:network} from $(u_i[0],\dots,u_i[T-1])$ and some admissible inputs of the neighbors of $i$.
It is thus a subset of $\prod_{t=0}^{T-1}U_i\times\prod_{t=0}^{T}\mathbb{R}^{n_i}$.
The matrices are the block concatenations
\begin{equation}
  A:=\operatorname{diag}(A_{11},\dots,A_{NN}),
  B:=\begin{bmatrix}
    B_{11} & \cdots & B_{1N}\\
    \vdots & \ddots & \vdots\\
    B_{N1} & \cdots & B_{NN}
  \end{bmatrix}.
  \label{eq:concat-matrix}
\end{equation}
Here $B_{ij}=0$ whenever $j\neq i$ and $j\notin\mathcal{N}_i$.
Since $A$ is block diagonal, the agents are input-coupled, in the sense that an agent is affected by its neighbors only through their inputs.

Partitioning the network into coalitions splits \eqref{eq:network}.
For a coalition $C=\{i_1,\dots,i_l\}$, concatenate $x_C[t]:=[x_{i_1}^\top[t],\dots,x_{i_l}^\top[t]]^\top$, $u_C[t]:=[u_{i_1}^\top[t],\dots,u_{i_l}^\top[t]]^\top$, and $A_C:=\operatorname{diag}(A_{i_1i_1},\dots,A_{i_li_l})$, and, for an agent $j\in \mathcal{N}_C$, define
\begin{equation}
  B_C:=\begin{bmatrix}
    B_{i_1i_1} & \cdots & B_{i_1i_l}\\
    \vdots & \ddots & \vdots\\
    B_{i_li_1} & \cdots & B_{i_li_l}
  \end{bmatrix},
  \quad
  B_{C,j}:=\begin{bmatrix}
    B_{i_1j}\\
    \vdots\\
    B_{i_lj}
  \end{bmatrix}.
  \label{eq:coalition-matrix}
\end{equation}
The dynamics \eqref{eq:network} then split into
\begin{align}
  x_C[t+1] &= A_C x_C[t]+B_C u_C[t]+d_C[t], \label{eq:coalition}\\
  d_C[t] &= \sum_{j\in \mathcal{N}_C}B_{C,j}u_j[t].
  \label{eq:coalition-coupling}
\end{align}
The agents in $C$ have no access to the trajectories of the agents outside $C$, so that $d_C[t]$ is unknown to them.
Instead, each agent $i$ holds an assumption $W_j^i\subseteq\prod_{t=0}^{T-1}\mathbb{R}^{m_j}\times\prod_{t=0}^{T}\mathbb{R}^{n_j}$.
A coalition then plans its trajectories independently, so as to meet its specification for every neighbor behavior admitted by these assumptions.
The only information available to an agent $i\in C$ about a neighbor $j\in\mathcal{N}_C$ is the assumption $W_j^i$, so that the agents in $C$ treat $d_C[t]$ as a disturbance generated by the behaviors of the neighbors admitted by $W_j^i$.

The network is required to satisfy a global specification, given as an STL formula $\varphi$ over the behavior of the network.
We identify the global specification with the set of behaviors $S:=\{w\in\mathbb{M} \mid (w,0)\models\varphi\}$.
This global specification is decomposed into local formulas $\varphi_i$ assigned to the agents, each written over the behavior of agent $i$ with affine predicates.
Each $\varphi_i$ induces the set of behaviors $S_i:=\{w_i\in\mathbb{M}_i \mid (w_i,0)\models\varphi_i\}$.

A finer partition keeps each coalition small and its trajectories cheap to compute, but leaves more of the neighboring inputs to be treated as a disturbance, so that the local specifications become conservative.
Strengthening the assumptions $W_j^i$ confines that disturbance, yet an assumption that is too strong can be violated by the very neighbors it constrains and can introduce circularity, depending on how the agents are grouped.
We therefore address the problem of giving a necessary and sufficient condition for the local specifications to decompose the global one, which characterizes exactly how far the assumptions $W_j^i$ may be strengthened.

\begin{problem}
\label{prob:decomposition}
Given the STL formula $\varphi$ of the global specification, the local formulas $\varphi_i$, and the assumptions $W_j^i$, find a necessary and sufficient condition under which the network satisfies $\varphi$ whenever every coalition of a given partition meets its local specification, and a condition under which this holds for every partition.
\end{problem}

We then address the problem of synthesizing a controller for a coalition.

\begin{problem}
\label{prob:controller_synthesis}
Let the dynamics be \eqref{eq:network} and let the predicates of the $\varphi_i$ be affine.
For a coalition $C$, synthesize a controller that meets the local specifications of its members.
\end{problem}
Problems~\ref{prob:decomposition} and \ref{prob:controller_synthesis} are treated in
Sections~\ref{sec:decomposition} and \ref{sec:algorithm}, respectively.

\section{DECOMPOSITION INTO LOCAL SPECIFICATIONS}
\label{sec:decomposition}

In this section, we formalize \emph{correctness} in terms of assume-guarantee contracts and characterize when the local specifications correctly decompose the global specification, first for a given partition and then for every partition. Neither the linear dynamics \eqref{eq:network} nor the affine form of the predicates is assumed here, both being needed only in Section~\ref{sec:algorithm}.

The specification $\varphi$ of the network is written as the assume-guarantee contract $\Ctr:=(\mathbb{M},S)$.
Its assumption is the entire universe of behaviors, which means that $\Ctr$ demands the satisfaction of $\varphi$ unconditionally.
The local specifications of a coalition $C$ are written, together with its assumptions, as a contract $\Ctr_C$.
We define correctness of the decomposition as follows.

\begin{definition}[Correct decomposition]
\label{def:correctness}
For a partition $P$, the contracts $\{\Ctr_C\}_{C\in P}$ form a correct decomposition of $\Ctr$ if
\begin{equation}
  \bigotimes_{C\in P}\Ctr_C\refines\Ctr.
  \label{eq:correctness}
\end{equation}
\end{definition}

Next, we build the coalition contracts $\Ctr_C$ from $S_i$ and $W_j^i$.

\begin{definition}[Contract of a coalition]
\label{def:coalition}
The members of a coalition $C$ jointly assume that a neighbor $j\in\mathcal{N}_C$ behaves in the set
\begin{equation}
  W_j^C:=\bigcap\{W_j^i \mid i\in C,\ j\in\mathcal{N}_i\},
  \notag
\end{equation}
and the local specification of $C$ is the saturated contract $\Ctr_C:=\big(\Asm_C,\hat{\Gua}_C\big)$ with
\begin{align}
  \Asm_C &:= \{w\in\mathbb{M} \mid w_j\in W_j^C\ \ \forall j\in
  \mathcal{N}_C\},
  \label{eq:coalition-assumption}\\
  \Gua_C &:= \{w\in\mathbb{M} \mid w_C\in S_C\},
  \label{eq:coalition-guarantee}\\
  \hat{\Gua}_C &:= \Gua_C\cup\big(\mathbb{M}\setminus\Asm_C\big),
  \label{eq:coalition-guarantee-sat}
\end{align}
where $w_C:=(w_{i_1},\dots,w_{i_l})$, $i_k\in C$, and $S_C:=\prod_{i\in C}S_i$.
\end{definition}
Note that $W^C_j$ may admit trajectories outside $\mathbb{M}_j$ and can be determined without knowing $\mathbb{M}_j$.
Trajectories in $W^C_j\setminus\mathbb{M}_j$ never occur, and any implementation designed from $W^C_j$ and $\Gua_C$ satisfies the contract $\Ctr_C$.

Let $\Gamma_P:=(P,E_P)$ be the directed graph with
\begin{equation}
  E_P:=\big\{(D,C) \ \big| \ C,D\in P, \ \exists j\in D\cap\mathcal{N}_C,\
  \mathbb{M}_j\not\subseteq W_j^C\big\},
  \label{eq:assumption-graph}
\end{equation}
so that an edge $(D,C)$ is present whenever a behavior of $D$ can violate the assumption that $C$ makes about $D$.
A necessary and sufficient condition is then given by the following theorem.

\begin{theorem}
\label{thm:refinement}
Let $S_i\neq\emptyset$ for every $i\in\Ag$ and let $P$ be a partition.
Then \eqref{eq:correctness} holds if and only if
\begin{align}
  & S_j\subseteq W_j^C, \quad \forall C\in P,\ \forall j\in \mathcal{N}_C,
  \label{eq:cond-cover}\\
  & \Gamma_P \text{ is acyclic},
  \label{eq:cond-acyclic}\\
  & \textstyle\prod_{i\in\Ag}S_i\subseteq S .
  \label{eq:cond-global}
\end{align}
\end{theorem}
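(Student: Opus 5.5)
The plan is to first turn \eqref{eq:correctness} into a set-theoretic condition, and then prove each direction separately. Write $(\Asm,\Gua)$ for the composition $\bigotimes_{C\in P}\Ctr_C$ given by \eqref{eq:composition}. By Definition~\ref{def:refinement}, the refinement $(\Asm,\Gua)\refines(\mathbb{M},S)$ means $\Asm=\mathbb{M}$ and $\Gua\subseteq S$. Here $\Gua=\bigcap_{C\in P}\hat{\Gua}_C$ and $\Asm=(\bigcap_C\Asm_C)\cup(\mathbb{M}\setminus\Gua)$. So $\Asm=\mathbb{M}$ is equivalent to $\bigcap_C\hat{\Gua}_C\subseteq\bigcap_C\Asm_C$. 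By \eqref{eq:coalition-guarantee-sat}, this is in turn equivalent to
\begin{equation}
  \bigcap_{C\in P}\hat{\Gua}_C=\bigcap_{C\in P}\big(\Gua_C\cap\Asm_C\big).
  \notag
\end{equation}
Hence \eqref{eq:correctness} holds if and only if every $w\in\mathbb{M}$ satisfying, for each $C$, ``$w\notin\Asm_C$ or $w_C\in S_C$'' in fact satisfies all assumptions $\Asm_C$, and moreover lies in $S$. I will work with this reformulation throughout.

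\emph{Sufficiency.} Assume \eqref{eq:cond-cover}--\eqref{eq:cond-global}, and take $w\in\bigcap_C\hat{\Gua}_C$. Since $\Gamma_P$ is acyclic, fix a topological order of $P$. I will show by induction along this order that $w\in\Asm_C$, and hence $w_C\in S_C$, for every $C$. Consider $C$, and let $j\in\mathcal{N}_C$ belong to the coalition $D\in P$. There are two cases.
\begin{itemize}
\item If $(D,C)\notin E_P$, then by \eqref{eq:assumption-graph} we have $\mathbb{M}_j\subseteq W_j^C$, so $w_j\in W_j^C$ because $w_j\in\mathbb{M}_j$.
\item If $(D,C)\in E_P$, then $D$ precedes $C$ in the order. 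By the induction hypothesis $w_j\in S_j$, and \eqref{eq:cond-cover} gives $w_j\in W_j^C$.
\end{itemize}
Thus $w\in\Asm_C$, and since $w\in\hat{\Gua}_C$ we get $w_C\in S_C$. This completes the induction. It follows that $w\in\prod_iS_i\subseteq S$ by \eqref{eq:cond-global}.

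\emph{Necessity.} Each condition will be proved by constructing a product behavior. Such a behavior lies in $\mathbb{M}$ because $S_i\subseteq\mathbb{M}_i$ and $\mathbb{M}$ is a product, and the construction uses $S_i\neq\emptyset$.
\begin{itemize}
\item For \eqref{eq:cond-global}: any $w\in\prod_iS_i$ satisfies every $\Gua_C$, hence lies in $\bigcap_C\hat{\Gua}_C\subseteq S$.
\item For \eqref{eq:cond-cover}: suppose $s_j\in S_j\setminus W_j^C$ for some $C$ and $j\in\mathcal{N}_C$. Build $w$ with $w_j=s_j$ and $w_i\in S_i$ for all other $i$. Then all guarantees hold, so $w\in\bigcap\hat{\Gua}_C$, but $w\notin\Asm_C$. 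This contradicts the reformulation.
\item For \eqref{eq:cond-acyclic}, which I expect to be the delicate step: suppose $\Gamma_P$ has a simple cycle $C_1\to C_2\to\cdots\to C_k\to C_1$. There are no self-loops, since $\mathcal{N}_C\cap C=\emptyset$, so $k\ge2$. For each edge $(C_l,C_{l+1})$, choose $j_l\in C_l\cap\mathcal{N}_{C_{l+1}}$ and $v_l\in\mathbb{M}_{j_l}\setminus W_{j_l}^{C_{l+1}}$. Because the cycle is simple, each $C_l$ has exactly one outgoing cycle edge, so the agents $j_l$ are distinct and the assignment is well defined. Set $w_{j_l}=v_l$ and $w_i\in S_i$ for every other agent $i$. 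Then each $C_{l+1}$ has its assumption violated, so $w\in\hat{\Gua}_{C_{l+1}}$. Every coalition off the cycle has all members in $S_i$, so its guarantee holds. Hence $w\in\bigcap_C\hat{\Gua}_C$, yet $w\notin\Asm_{C_2}$, contradicting the reformulation.
\end{itemize}
The main care needed is in this cycle construction: ensuring that the chosen witnesses do not conflict, and that the behavior stays in $\mathbb{M}$. Both follow from the simplicity of the cycle and the product structure of $\mathbb{M}$.
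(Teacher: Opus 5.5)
Your proof is correct and follows the paper's proof closely. You use the same reformulation of \eqref{eq:correctness} as $\bigcap_{C\in P}\hat{\Gua}_C\subseteq\bigcap_{C\in P}\Asm_C$ together with $\bigcap_{C\in P}\hat{\Gua}_C\subseteq S$, and the same necessity witnesses: one agent in $S_j\setminus W_j^C$, or one violating agent per edge of a simple cycle, with all others in $S_i$. The only difference is that you prove sufficiency directly by induction along a topological order of $\Gamma_P$, whereas the paper argues by contraposition (every coalition whose assumption is violated by such a behavior must receive an edge from another such coalition, forcing a cycle); this is the same argument in dual form.
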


\begin{proof}
By \eqref{eq:composition}, the composition $\bigotimes_{C\in P}\Ctr_C$ has guarantee $\Gua_\otimes=\bigcap_{C\in P}\hat{\Gua}_C$ and assumption $\Asm_\otimes=\big(\bigcap_{C\in P}\Asm_C\big)\cup\big(\mathbb{M}\setminus\Gua_\otimes\big)$.
For a behavior $w$, write $P_{\bar{\Gua}}(w):=\{C\in P \mid w_C\notin S_C\}$ and $P_{\bar{\Asm}}(w):=\{C\in P \mid w\notin\Asm_C\}$ for the sets of coalitions whose guarantee and whose assumption, respectively, $w$ violates.

By Definition~\ref{def:refinement}, \eqref{eq:correctness} amounts to the two inclusions
\begin{align}
&\Asm_\otimes\supseteq \mathbb{M}
\label{eq:refinement-Asm}\\
&\Gua_\otimes\subseteq S.
\label{eq:refinement-Gua}
\end{align}
The first holds if and only if $\bigcap_{C\in P}\hat{\Gua}_C\subseteq\bigcap_{C\in P}\Asm_C$, and by \eqref{eq:coalition-assumption} and \eqref{eq:coalition-guarantee-sat} the memberships $w\in\bigcap_{C\in P}\hat{\Gua}_C$ and $w\in\bigcap_{C\in P}\Asm_C$ read $P_{\bar{\Gua}}(w)\subseteq P_{\bar{\Asm}}(w)$ and $P_{\bar{\Asm}}(w)=\emptyset$, respectively.
Hence \eqref{eq:refinement-Asm} is equivalent to
\begin{align}
  \nexists w\ \text{s.t.}\ P_{\bar{\Gua}}(w)\subseteq P_{\bar{\Asm}}(w)\text{ and }P_{\bar{\Asm}}(w)\neq\emptyset.
  \label{eq:refinement-Asm equivalent}
\end{align}

We first prove that \eqref{eq:cond-cover} and \eqref{eq:cond-acyclic} together imply \eqref{eq:refinement-Asm equivalent}, arguing by contraposition.
Let $w$ satisfy $P_{\bar{\Gua}}(w)\subseteq P_{\bar{\Asm}}(w)$ and $P_{\bar{\Asm}}(w)\neq\emptyset$, and let $C\in P_{\bar{\Asm}}(w)$.
By \eqref{eq:coalition-assumption}, $w_j\in\mathbb{M}_j\setminus W_j^C$ for some $j\in\mathcal{N}_C$, so that $(D,C)\in E_P$ for the coalition $D$ containing $j$.
If \eqref{eq:cond-cover} holds, then $w_j\notin S_j$, and hence $D\in P_{\bar{\Gua}}(w)\subseteq P_{\bar{\Asm}}(w)$.
Every coalition in $P_{\bar{\Asm}}(w)$ therefore receives an edge from a coalition in $P_{\bar{\Asm}}(w)$, and following these edges backward from any of its elements revisits a coalition after finitely many steps.
Thus $\Gamma_P$ contains a cycle, and \eqref{eq:cond-cover} and \eqref{eq:cond-acyclic} cannot both hold.

Conversely, suppose that one of the two fails.
Since $\mathbb{M}$ is a product and every $S_i$ is nonempty, the coordinates of a behavior may be prescribed independently.
If \eqref{eq:cond-cover} fails, choose $C\in P$ and $j\in\mathcal{N}_C$ with $S_j\not\subseteq W_j^C$, and let $w$ satisfy $w_j\in S_j\setminus W_j^C$ and $w_l\in S_l$ for every $l\neq j$.
Then $P_{\bar{\Gua}}(w)=\emptyset$ and $C\in P_{\bar{\Asm}}(w)$, so that \eqref{eq:refinement-Asm equivalent} fails.
If \eqref{eq:cond-acyclic} fails, let $C_1\to C_2\to\cdots\to C_m\to C_{m+1}=C_1$ be a cycle of $\Gamma_P$ with $C_1,\dots,C_m$ pairwise distinct.
By \eqref{eq:assumption-graph}, each edge $(C_s,C_{s+1})$ carries an agent $i_s\in C_s$ with $\mathbb{M}_{i_s}\not\subseteq W_{i_s}^{C_{s+1}}$, and these agents are distinct, the coalitions being disjoint.
Letting $w$ satisfy $w_{i_s}\in\mathbb{M}_{i_s}\setminus W_{i_s}^{C_{s+1}}$ for every $s$ and $w_l\in S_l$ for the remaining agents gives $P_{\bar{\Asm}}(w)\supseteq\{C_1,\dots,C_m\}\supseteq P_{\bar{\Gua}}(w)$, so that \eqref{eq:refinement-Asm equivalent} fails again.
Conditions \eqref{eq:cond-cover} and \eqref{eq:cond-acyclic} are thus together equivalent to \eqref{eq:refinement-Asm}.

It remains to treat \eqref{eq:refinement-Gua}.
Assume \eqref{eq:cond-cover} and \eqref{eq:cond-acyclic}, so that $\Gua_\otimes\subseteq\bigcap_{C\in P}\Asm_C$ by the above.
Every $w\in\Gua_\otimes$ then satisfies $w\in\hat{\Gua}_C\cap\Asm_C=\Gua_C\cap\Asm_C$ for each $C\in P$, whence $w\in\bigcap_{C\in P}\Gua_C=\prod_{i\in\Ag}S_i$, and \eqref{eq:cond-global} places this set in $S$.
Conversely, \eqref{eq:refinement-Gua} gives $\prod_{i\in\Ag}S_i=\bigcap_{C\in P}\Gua_C\subseteq\Gua_\otimes\subseteq S$, which is \eqref{eq:cond-global}.
\end{proof}

Let $P_1:=\{\{1\},\{2\},\dots,\{N\}\}$ be the partition into singletons.
The following theorem reduces correctness for every partition to the singleton case under a stronger condition.

\begin{theorem}
\label{thm:free}
Let $S_i\neq\emptyset$ for every $i\in\Ag$.
Then \eqref{eq:correctness} holds for every partition if and only if \eqref{eq:cond-cover} and \eqref{eq:cond-global} hold for $P_1$ and there do not exist edges $(\{j_1\},\{i_1\}),(\{j_2\},\{i_2\})\in E_{P_1}$ and disjoint coalitions $C$ and $D$ that satisfy
\begin{equation}
  \{i_1,j_2\}\subseteq C,\qquad \{j_1,i_2\}\subseteq D .
  \label{eq:cond-link}
\end{equation}
\end{theorem}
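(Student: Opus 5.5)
The plan is to reduce everything to Theorem~\ref{thm:refinement}, applied partition by partition, and to rewrite its three conditions in terms of the singleton data. First, record how the objects for a general partition $P$ relate to those of $P_1$. Since $W_j^C=\bigcap\{W_j^i\mid i\in C,\ j\in\mathcal{N}_i\}$ and $S_j\subseteq W_j^C$ holds if and only if $S_j\subseteq W_j^i$ for every $i\in C$ with $j\in\mathcal{N}_i$, condition \eqref{eq:cond-cover} for $P$ follows from \eqref{eq:cond-cover} for $P_1$ whenever $j\in\mathcal{N}_C$. Condition \eqref{eq:cond-global} does not depend on $P$ at all. Similarly, $\mathbb{M}_j\not\subseteq W_j^C$ holds if and only if $\mathbb{M}_j\not\subseteq W_j^i$ for some $i\in C$ with $j\in\mathcal{N}_i$. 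Hence $(D,C)\in E_P$ if and only if there are $j\in D$ and $i\in C$ with $(\{j\},\{i\})\in E_{P_1}$; the requirement $j\notin C$ is automatic because $D\cap C=\emptyset$. In words, $\Gamma_P$ is the quotient of $\Gamma_{P_1}$ by $P$, with self-loops discarded.

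For the ``only if'' direction, apply Theorem~\ref{thm:refinement} to $P_1$ itself; this yields \eqref{eq:cond-cover} and \eqref{eq:cond-global} for $P_1$. Now suppose edges $(\{j_1\},\{i_1\})$ and $(\{j_2\},\{i_2\})$ and disjoint coalitions $C,D$ satisfying \eqref{eq:cond-link} exist. Complete $\{C,D\}$ to a partition $P$ by adding the singletons of the remaining agents; singletons are trivially connected, so this is a valid partition. By the quotient description, $(D,C)\in E_P$ via $j_1\in D$, $i_1\in C$, and $(C,D)\in E_P$ via $j_2\in C$, $i_2\in D$. This is a 2-cycle, so Theorem~\ref{thm:refinement} fails for $P$, a contradiction.

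For the ``if'' direction, fix any partition $P$. Conditions \eqref{eq:cond-cover} and \eqref{eq:cond-global} transfer as above, so it remains to show that $\Gamma_P$ is acyclic. This is the main obstacle, since the hypothesis forbids only 2-cycles. First I would dispose of 2-cycles: any 2-cycle $C\to D\to C$ in $\Gamma_P$ yields, through the quotient description, exactly the configuration \eqref{eq:cond-link}. For longer cycles $C_1\to\cdots\to C_m\to C_1$ with $m\ge3$, I would try to merge consecutive coalitions: take $C:=C_1$ and $D:=C_2\cup\dots\cup C_m$. The set $D$ need not induce a connected subgraph, but I would attempt to argue either that coalitions only need disjointness in \eqref{eq:cond-link} as used, or that the edges of $\Gamma_{P_1}$ join neighbours in $G$, since $E_{P_1}$ requires $j\in\mathcal{N}_i$. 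Along the cycle, the entry agent $i_s$ and the exit agent $j_{s+1}$ in the same coalition $C_{s+1}$ are connected within $C_{s+1}$, and consecutive coalitions contain adjacent agents. Hence $C_2\cup\dots\cup C_m$ is connected, because the path $C_2\to\cdots\to C_m$ links them through physical edges.

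This gives $D$ connected, disjoint from $C_1$, with $j_1\in C_1$ entering $D$ at $i_1\in C_2$ and $j_m\in C_m\subseteq D$ entering $C_1$. That is exactly \eqref{eq:cond-link}, which contradicts the hypothesis, so $\Gamma_P$ is acyclic. Theorem~\ref{thm:refinement} then gives \eqref{eq:correctness}. The care needed is in the connectivity of the merged set, which holds because each edge of $\Gamma_{P_1}$ joins adjacent agents in $G$.
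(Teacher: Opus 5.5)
Your proposal is correct and is essentially the paper's proof: reduce to Theorem~\ref{thm:refinement}, note that \eqref{eq:cond-cover} and \eqref{eq:cond-global} transfer from $P_1$ to any partition, turn \eqref{eq:cond-link} into a $2$-cycle of $\Gamma_P$, and collapse a longer cycle $C_1\to\cdots\to C_m\to C_1$ into the pair $C_1$, $C_2\cup\cdots\cup C_m$, which is a coalition because edges of $\Gamma_P$ join physically adjacent agents. Your quotient description of $\Gamma_P$ and your completion of $\{C,D\}$ to a partition by singletons spell out steps the paper leaves implicit. The argument needs the connectivity route you settle on, not the ``disjointness suffices'' alternative you float, since \eqref{eq:cond-link} requires $C$ and $D$ to be coalitions.
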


\begin{proof}
Conditions \eqref{eq:cond-cover} and \eqref{eq:cond-global} hold for every partition if they hold for $P_1$.
It therefore remains to show that \eqref{eq:cond-acyclic} holds for every partition if and only if no pair of edges and no pair of disjoint coalitions satisfies \eqref{eq:cond-link}.

By the definition~\eqref{eq:assumption-graph} of $E_P$, if a partition $P$ contains coalitions $C$ and $D$ as in \eqref{eq:cond-link}, then $E_P$ contains both $(C,D)$ and $(D,C)$, so that $\Gamma_P$ has a cycle.
Conversely, let $\Gamma_P$ contain a cycle $C_1\to\cdots\to C_m\to C_1$.
Consecutive coalitions are adjacent in $G$, so $C_2\cup\cdots\cup C_m$ is again a coalition and forms a pair of opposite edges with $C_1$.
\end{proof}

The condition of Theorem~\ref{thm:free} ranges over pairs of edges, so that it has to be checked globally.
The following corollary instead imposes a stronger requirement on the assumptions, which the agents can check on their own. 
This requirement cannot be weakened without exploiting the structure of $G$,
as the corollary shows.

\begin{corollary}
\label{cor:free}
Let $S_i\neq\emptyset$ for every $i\in\Ag$, and let \eqref{eq:cond-cover} and \eqref{eq:cond-global} hold for $P_1$.
Then 
\eqref{eq:correctness} holds for every partition if at most one singleton has nonzero in-degree in $\Gamma_{P_1}$, or at most one singleton has nonzero out-degree.
This condition is necessary when $G$ is the complete graph.
\end{corollary}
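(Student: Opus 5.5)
The plan is to reduce everything to Theorem~\ref{thm:free}. Conditions \eqref{eq:cond-cover} and \eqref{eq:cond-global} are assumed for $P_1$. So sufficiency amounts to showing that the degree condition rules out a pair of edges $(\{j_1\},\{i_1\}),(\{j_2\},\{i_2\})\in E_{P_1}$ together with disjoint coalitions $C,D$ satisfying \eqref{eq:cond-link}.

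\textbf{Sufficiency.} Suppose such a configuration exists. The heads $i_1\in C$ and $i_2\in D$ are distinct because $C\cap D=\emptyset$. Likewise the tails $j_2\in C$ and $j_1\in D$ are distinct. So the singletons $\{i_1\}$ and $\{i_2\}$ both have nonzero in-degree in $\Gamma_{P_1}$, and $\{j_1\}$ and $\{j_2\}$ both have nonzero out-degree. This contradicts each alternative of the hypothesis, and Theorem~\ref{thm:free} then gives \eqref{eq:correctness} for every partition.

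\textbf{Necessity when $G$ is complete.} Every subset of $\Ag$ is now a coalition, and I argue by contraposition. Assume at least two singletons have nonzero in-degree and at least two have nonzero out-degree. I want edges $(\{j_1\},\{i_1\})$ and $(\{j_2\},\{i_2\})$ with $i_1\neq i_2$, $j_1\neq j_2$, $i_1\neq j_1$ (automatic for an edge), $j_2\neq i_2$, and additionally $i_1\neq j_1$ and $j_2\neq i_2$ across the two sets. Then $C=\{i_1,j_2\}$ and $D=\{j_1,i_2\}$ must be disjoint, which means $i_1\notin\{j_1,i_2\}$ and $j_2\notin\{j_1,i_2\}$. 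Both are ensured if the two edges are "independent" in this sense: $i_1\neq i_2$, $j_1\neq j_2$, and neither $i_1=j_2$ nor $j_1=i_2$ is a problem except for overlaps inside $C$ or $D$. Both $C$ and $D$ are connected since $G$ is complete, and Theorem~\ref{thm:free} then shows \eqref{eq:correctness} fails for some partition.

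\textbf{Main obstacle: choosing the two edges.} Pick an edge $(\{j_1\},\{i_1\})$. Since another singleton has nonzero in-degree, there is an edge $(\{j\},\{i_2\})$ with $i_2\neq i_1$. If $j\neq j_1$, set $j_2=j$. The remaining requirement for disjointness is $\{i_1,j_2\}\cap\{j_1,i_2\}=\emptyset$, that is, $j_2\neq i_2$ (true) and $i_1\neq j_1$ (true) — these hold, since the other equalities $i_1=i_2$ and $j_2=j_1$ are excluded. If every edge into singletons other than $\{i_1\}$ comes from $j_1$, use the second out-degree source $j'\neq j_1$ with an edge $(\{j'\},\{i'\})$. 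If $i'\neq i_1$, swap roles. Otherwise all edges have the form $\{j_1\}\to\{\cdot\}$ or $\{\cdot\}\to\{i_1\}$, and I pair $(\{j_1\},\{i_2\})$ with $(\{j'\},\{i_1\})$. This leaves a residual degenerate case, such as $i_2=j'$ with only three agents involved, giving the pattern $j_1\to i_2$, $i_2\to i_1$, possibly $j_1\to i_1$. I expect this case analysis to be the delicate step and would check it directly: I would try to derive a contradiction or a 2-cycle-inducing pair there. If that fails, I would use the edge $(\{i_2\},\{i_1\})$ with $(\{j_1\},\{i_2\})$ and test disjointness of $\{i_1,i_2\}$ and $\{j_1,i_2\}$. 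If this too fails, the statement's necessity claim needs a refinement that I would flag.
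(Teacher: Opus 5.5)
Your sufficiency argument is correct; it is the paper's argument read contrapositively. For necessity you use the paper's construction: two edges of $\Gamma_{P_1}$ with distinct heads and distinct tails, the partition $\{\{i_1,j_2\},\{j_1,i_2\}\}$ completed by singletons, connectivity from completeness of $G$, then Theorem~\ref{thm:free}.

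\textbf{The gap.} You never close the necessity direction. You leave a ``residual degenerate case'' open and suggest the statement may need refinement. The missing observation is that disjointness of $\{i_1,j_2\}$ and $\{j_1,i_2\}$ is \emph{equivalent} to $i_1\neq i_2$ and $j_1\neq j_2$. The only cross-coincidences are $i_1=j_1$, $i_1=i_2$, $j_2=j_1$ and $j_2=i_2$. The first and last are impossible, because $(\{j\},\{i\})\in E_{P_1}$ requires $j\in\mathcal{N}_{\{i\}}$, hence $j\neq i$. Coincidences inside one set, $i_1=j_2$ or $j_1=i_2$, are harmless. That coalition simply becomes a singleton, which is still a coalition, and the two opposite edges of $\Gamma_P$ remain.

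\textbf{Your degenerate case is already settled.} Take the pair you chose, $(\{j_1\},\{i_2\})$ and $(\{j'\},\{i_1\})$, with $j'=i_2$. Then \eqref{eq:cond-link} asks for $C\supseteq\{i_2\}$ and $D\supseteq\{j_1,i_1\}$. These are disjoint, so the partition $\{\{i_2\},\{j_1,i_1\}\}$ plus singletons has a 2-cycle.

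\textbf{Your fallback test misassigns the endpoints.} You compare $\{i_1,i_2\}$ against $\{j_1,i_2\}$. For the edges $(\{i_2\},\{i_1\})$ and $(\{j_1\},\{i_2\})$, the sets required by \eqref{eq:cond-link} are actually $\{i_1,j_1\}$ and $\{i_2\}$, which are again disjoint. The failure you anticipate is an artifact of this mislabeling.

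\textbf{A clean existence argument.} Take an edge $(\{j_1\},\{i_1\})$ and an edge $(\{j'\},\{i'\})$ with $j'\neq j_1$.
If $i'\neq i_1$, you are done.
Otherwise take an edge $(\{j''\},\{i''\})$ with $i''\neq i_1$. Pair it with $(\{j_1\},\{i_1\})$ if $j''\neq j_1$, or pair $(\{j_1\},\{i''\})$ with $(\{j'\},\{i_1\})$ if $j''=j_1$.
Hence the necessity claim holds exactly as stated. The paper's proof only asserts that such a pair of edges exists, so writing this argument out is worthwhile, but it must reach the conclusion rather than end in a hedge.
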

\begin{proof}
If there is at most one singleton with nonzero in-degree, $i_1=i_2$ holds for any pair of edges $(\{j_1\},\{i_1\}),(\{j_2\},\{i_2\})$ in $E_{P_1}$.
Therefore no disjoint coalitions can satisfy \eqref{eq:cond-link}, and \eqref{eq:correctness} follows from Theorem~\ref{thm:free}.
The case of a unique singleton with nonzero out-degree is symmetric.

If there are at least two singletons with nonzero in-degree and at least two with nonzero out-degree, then there exist edges $(\{j_1\},\{i_1\}),(\{j_2\},\{i_2\})$ in $E_{P_1}$ with $i_1\neq i_2$ and $j_1\neq j_2$. When $G$ is complete, every subset of $\Ag$ can be a coalition, so the partition $\{\{i_1,j_2\},\{j_1,i_2\}\}\cup\{\{i\}\mid i\in\Ag\setminus\{i_1,j_1,i_2,j_2\}\}$ satisfies \eqref{eq:cond-link}.
\end{proof}

The condition of Corollary~\ref{cor:free} holds trivially when $\mathbb{M}_j\subseteq W_j^i$ for every $i\in\Ag$ and every $j\in\mathcal{N}_i$.
Otherwise, once the agents agree on which one of them carries the nonzero in-degree or out-degree, each agent can locally decide how strong the assumption about its neighbors can be.
Corollary~\ref{cor:free} therefore makes the partition a free design variable, to be chosen without further concern for correctness.

\section{CONTROLLER DESIGN FOR A COALITION}
\label{sec:algorithm}

In this section, following \cite{Belta2019-ve}, we design a controller for a coalition $C$ that meets the contract $\Ctr_C$.
Throughout this section we assume the linear dynamics \eqref{eq:network} and affine predicates in the local formulas $\varphi_i$. In addition, the $\varphi_i$ are taken as given, with $\bigwedge_{i\in\Ag}\varphi_i$ implying $\varphi$.
The assumptions $W_j^i$ take the form $\prod_{t=0}^{T-1}U_j^i\times\prod_{t=0}^{T}\mathbb{R}^{n_j}$, where $U_j^i$ is the set of inputs that agent $i$ requires of agent $j$.
They are taken to satisfy the hypothesis of Corollary~\ref{cor:free}.

With the dynamics (\ref{eq:coalition}), the disturbance in \eqref{eq:coalition-coupling} that acts on agent $i$ is 
\begin{equation}
  D^C_i:=\Big\{d \ \Big| \ d=\!\!\!\sum_{j\in(\mathcal{N}_i\cap\mathcal{N}_C)}\!\!\!
  B_{ij}u_j,\ u_j\in U_j^i\Big\}.
  \notag
\end{equation}
To plan the trajectories of a coalition against such a set-valued disturbance, we adopt the tube-based approach \cite{Mayne2005-jv}.
Its construction rests on a stabilizing feedback and an invariant set for each agent.

\begin{assumption}
\label{as:tube}
For every agent $i\in\Ag$ there exist a gain $K_i$ and a bounded convex set $Z_i\subseteq\mathbb{R}^{n_i}$ satisfying $(A_{ii}+B_{ii}K_i)Z_i\oplus D^{\{i\}}_i\subseteq Z_i$ and $K_iZ_i\subseteq U_i$, where $0\in Z_i$ and $\oplus$ denotes the Minkowski sum.
\end{assumption}

Let the input of agent $i$ be $u_i[t]=v_i[t]+u_i^{\mathrm{fb}}[t]$, where $v_i[t]$ and $u_i^{\mathrm{fb}}[t]$ are the feedforward and feedback parts, respectively.
The feedforward $v_i[t]$ generates the nominal behavior $w_i^{\mathrm{nom}}:=(v_i[0],\dots,v_i[T-1],x_i^{\mathrm{nom}}[0],\dots,x_i^{\mathrm{nom}}[T])$ through \eqref{eq:coalition} with $d_C\equiv0$.
The feedback part is taken as
$u_i^{\mathrm{fb}}[t]:=K_i\big(x_i[t]-x_i^{\mathrm{nom}}[t]\big)$.
This feedback confines the deviation from the nominal state to $Z_i$ whenever the disturbance lies in $D^{\{i\}}_i$ at every time step.

Agents that belong to the same coalition as agent $j$ know the feedforward input $v_j[t]$, the gain $K_j$, and the set $Z_j$, so that the feedback $u_j^{\mathrm{fb}}[t]$ is the only part of the input of $j$ that remains uncertain to them.
The deviation of agent $i\in C$ is therefore driven by ${D'}^{C}_i:=D^C_i\oplus\bigoplus_{j\in\mathcal{N}_i\cap C}B_{ij}K_jZ^C_j$,
in which $Z_j^C$ denotes the set confining the deviation of the member $j$.
If the sets $Z_i^C$ satisfy
\begin{equation}
  \forall i\in C,\ (A_{ii}+B_{ii}K_i)Z^C_i\oplus {D'}^{C}_i\subseteq Z^C_i,\ K_iZ^C_i\subseteq U_i,
  \label{eq:rpi_coalition}
\end{equation}
the feedback $u_i^{\mathrm{fb}}[t]$ keeps the deviation in $Z^C_i$.
If $K_jZ_j\subseteq U_j^i$ for all adjacent $i,j\in C$, then substituting
$Z_j$ for $Z_j^C$ gives ${D'}^C_i\subseteq D^{\{i\}}_i$, and hence $Z_i$ satisfies \eqref{eq:rpi_coalition}. 
Tighter sets can then be found, for example, among the scalings $Z^C_i:=\gamma_iZ_i$, by decreasing the $\gamma_i\geq 0$ from $1$ as long as \eqref{eq:rpi_coalition} holds.

Then $u_i^{\mathrm{fb}}[t]$ confines the deviation $x_i[t]-x_i^{\mathrm{nom}}[t]$ to $Z_i^C$ at every time step, and the feedback itself lies in $K_iZ_i^C$.
Therefore \eqref{eq:robust-bound} gives the lower bound $\rho(\varphi_i,w_i^{\mathrm{nom}})-\varepsilon_i^C$ on the robustness degree achieved by agent $i$, where $\varepsilon_i^C:=\sup_{\delta w_i \in K_iZ_i^C\times Z_i^C}
\|C_{\varphi_i}\delta w_i\|_\infty$ is a constant.
The members of $C$ have to meet their formulas simultaneously, so the margin of the coalition is the smallest of the individual margins, and we define the robustness degree of the coalition as $\min_{i\in C}(\rho(\varphi_i,w_i^{\mathrm{nom}})-\varepsilon_i^C)$.
The controller of $C$ is then obtained from the mixed-integer program in \cite{Belta2019-ve}, extended to a coalition,
\begin{equation}
\begin{aligned}
  \min\ & \textstyle\sum_{t=0}^{T-1}v_C^\top[t] Rv_C[t]-\lambda\rho_C\\
  \text{s.t.}\ & x_C^{\mathrm{nom}}[t+1]
      =A_Cx_C^{\mathrm{nom}}[t]+B_Cv_C[t],\\
  & v_i[t]\in U_i\ominus K_iZ_i^C,\\
  & \rho_C\leq\rho^{\varphi_i}-\varepsilon_i^C,\\
  & \mathcal{C}_{\varphi_i},\ z_0^{\varphi_i}=1,\\
  & t=0,1,\dots,T-1,\quad i\in C ,
\end{aligned}
\label{eq:miqp}
\end{equation}
where $\ominus$ denotes the Pontryagin difference, and the constraints $\mathcal{C}_{\varphi_i}$, $z_0^{\varphi_i}=1$ encode $\varphi_i$ as in Section~\ref{sec:stl}.
Here $x_C^{\mathrm{nom}}[0]$ is the measured state, $R\succeq0$ weights the control effort, and $\lambda>0$ weights the margin.
The real-valued variable $\rho_C$ is equal to $\min_{i\in C}(\rho(\varphi_i,w_i^{\mathrm{nom}})-\varepsilon_i^C)$ at an optimal solution. A plan with $\rho_C\geq0$ meets the local formulas of all the members of $C$ if every neighbor $j\in\mathcal{N}_C$ behaves in $W_j^C$.

\section{SIMULATION}
\label{sec:simulation}
The proposed decomposition and controller are illustrated on a network of input-coupled tanks obtained by extending the quadruple-tank process of \cite{Johansson2000-be}.
In the plant, agent $i\in\Ag=\{1,\dots,12\}$ owns upper tank $i$ and lower tank $i+12$ together with pump $i$, whose flow is split between the agent's own upper tank and the upper tank of another agent, as shown in Fig.~\ref{fig:agent}.
Each upper tank discharges into the lower tank of the same agent, and each lower tank discharges into a reservoir.
The agents are interconnected in the ring of Fig.~\ref{fig:network}.

\begin{figure}[t]
\centering
\subfloat[An agent\label{fig:agent}]
{\raisebox{\dimexpr44.8bp-0.5\height\relax}[89.6bp][0pt]{\includegraphics[scale=0.35]{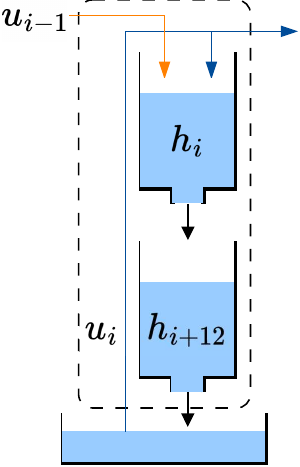}}}
\subfloat[The network of agents\label{fig:network}]
{\raisebox{\dimexpr44.8bp-0.5\height\relax}[89.6bp][0pt]{\includegraphics[scale=0.34]{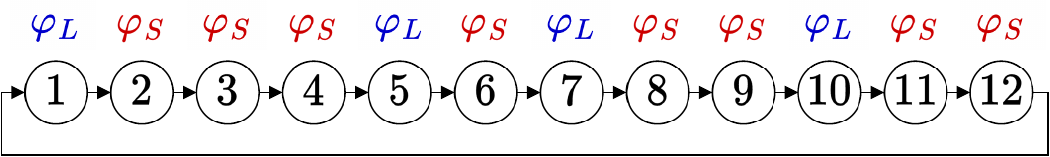}}}
\caption{Diagram of an agent and of the network of agents.}
\label{fig:system_diagram}
\end{figure}

The state of agent $i$ is $x_i[t]:=[h_i[t],h_{i+12}[t]]^\top$, where $h_\ell[t]$ denotes the level of tank $\ell$, and the input $u_i[t]$ is the voltage applied to pump $i$.
Both $h_\ell[t]$ and $u_i[t]$ are taken as deviations from an equilibrium.
With the physical parameters of \cite{Johansson2000-be} and a zero-order hold discretization at the sampling period $T_s=4\,\mathrm{s}$, the matrices of \eqref{eq:concat-matrix} are
\begin{align}
  A_{ii}&=\begin{bmatrix}0.938 & 0\\ 0.060 & 0.938\end{bmatrix},\
  B_{ii}=\begin{bmatrix}0.187\\ 0.006\end{bmatrix},\
  B_{i,i-1}=\begin{bmatrix}0.301\\ 0.010\end{bmatrix}\notag
\end{align}
for odd $i$ and
\begin{align}
  A_{ii}&=\begin{bmatrix}0.957 & 0\\ 0.042 & 0.957\end{bmatrix},\
  B_{ii}=\begin{bmatrix}0.137\\ 0.003\end{bmatrix},\
  B_{i,i-1}=\begin{bmatrix}0.219\\ 0.005\end{bmatrix}\notag
\end{align}
for even $i$, the remaining blocks being zero.
The inputs are constrained to $U_i:=[-1,1]$, and the assumptions $U^i_j$ of the agents are set equal to $U_j$. The horizon is $T=75$ steps, that is, $300\,\mathrm{s}$.
Agents $1$, $5$, $7$, and $10$ are assigned the loose formula
\begin{equation}
        \varphi_L:=G_{[0,300]}(-6\leq h_i\leq 6)\wedge F_{[0,200]}G_{[0,100]}(-6\leq h_{i+12})\notag
\end{equation}
and the remaining agents the strict formula
 \begin{equation}
        \varphi_S:=G_{[0,300]}(-5\leq h_i\leq 5)\wedge F_{[0,200]}G_{[0,100]}(-1\leq h_{i+12}).\notag
\end{equation}

The controller \eqref{eq:miqp} is evaluated on two kinds of partitions.
We define the uniform partitions $P_k:=\{C_{k,m}\mid m\in\{1,\dots,12/k\}\}$, where $C_{k,m}:=\{(m-1)k+1,(m-1)k+2,\dots,mk\}$.
The tailored partition $P_d:=\{\{1,2,3,4\},\{5,6\},\{7,8,9\},\{10,11,12\}\}$ instead places the coalition boundaries so that every member exposed to a neighbor outside its coalition carries the loose formula.
The simulations were run on a laptop with a $2.0$ GHz Intel Core Ultra 7 processor, using Gurobi as the optimizer.

\begin{table}[t]
\caption{Performance for each partition}
\label{table:partition_results}
\begin{center}
\begin{tabular}{|>{\centering\arraybackslash}p{0.25cm}||>{\centering}p{0.85cm}|>{\centering}p{1.72cm}|>{\centering}p{0.75cm}|>{\centering\arraybackslash}p{2.4cm}|}
\hline
& $\min \rho_C$ & $\max t_{\mathrm{comp}}$ [s]& $\max n_b$ & Failed coalitions \\
\hline\hline
$P_1$ & $-2.18$ & $0.36$ &  $127$ & $\{C_{1,m}\mid m=2,3,4,6,8,9,11,12\}$\\
\hline
$P_2$ & $-1.76$ & $1.34$ &  $254$ & $\{C_{2,m}\mid m=2,5,6\}$\\
\hline
$P_3$ & $-2.18$ & $7.67$ &  $381$ & $\{C_{3,m}\mid m=2\}$\\
\hline
$P_4$ & $-1.75$ & $5.43$ &  $508$ & $\{C_{4,m}\mid m=3\}$\\
\hline
$P_6$ & $+1.33 $& $18.56$ &  $762$ & \textemdash \\
\hline
$P_{12}$ & $+2.83$ & $113.02$ & $1524$ & \textemdash \\
\hline
$P_d$ & $+1.09$ & $7.17$ &  $508$ & \textemdash \\
\hline
\end{tabular}
\end{center}
\end{table}

For each partition, Table~\ref{table:partition_results} shows the minimum robustness degree $\rho_C$, the maximum computation time $t_{\mathrm{comp}}$ of \eqref{eq:miqp}, and the maximum number $n_b$ of binary variables in \eqref{eq:miqp} over the coalitions in the partition, together with the coalitions that fail.
The number of binary variables grows linearly with the coalition size, whereas the computation time grows far more rapidly.
Among the uniform partitions, the robustness degree becomes nonnegative only for the coarse ones $P_6$ and $P_{12}$.
A coalition fails when a member carrying the strict formula is exposed to inputs from outside the coalition.
Such exposure forces that member to account for every input its outside neighbor may apply, so that its tube grows and leaves the region admitted by its formula.
Fig.~\ref{fig:trajectories} illustrates this effect.
Agent $3$ is exposed under $P_1$ (Fig.~\ref{fig:traj-uniform}), whereas under $P_d$ (Fig.~\ref{fig:traj-tailored}) its neighbor belongs to the same coalition.
The tailored partition avoids this effect while keeping the coalitions small, and thereby attains the shortest computation time among the partitions that succeed.

\begin{figure}[t]
\centering
\subfloat[Partition $P_1$\label{fig:traj-uniform}]
{\includegraphics[width=0.45\columnwidth]{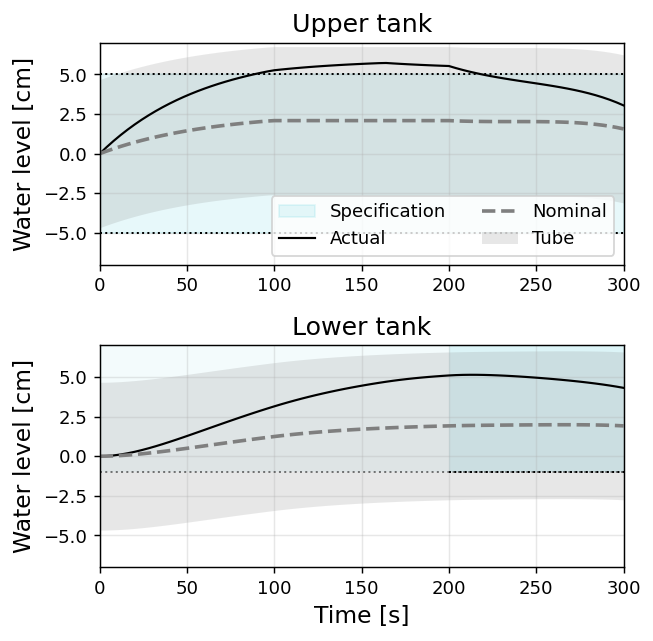}}%
\hfill
\subfloat[Partition $P_d$\label{fig:traj-tailored}]
{\includegraphics[width=0.45\columnwidth]{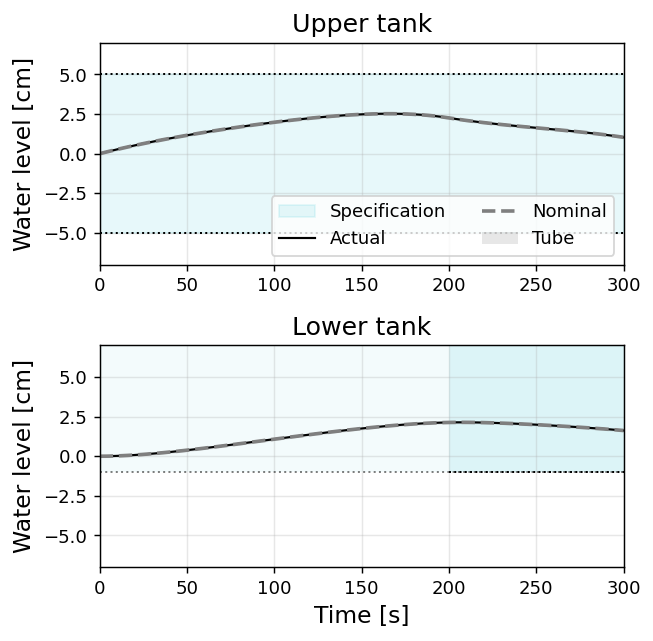}}
\caption{Trajectories of agent $3$ under the partitions $P_1$ and $P_d$.}
\label{fig:trajectories}
\end{figure}

\section{CONCLUSION}
We presented a necessary and sufficient condition for local specifications given as assume-guarantee contracts to correctly decompose a global specification, and characterized when this holds for every partition.
These results turn the partition into a design variable.
For linear systems and STL formulas with affine predicates, we synthesized controllers that satisfy the contract assigned to each coalition.
Simulations confirmed the effectiveness of designing the partition.

Future work includes algorithms that form coalitions adaptively, building on the coalitional control literature \cite{Fele2017-dk}.
Since this framework provides access to the behavior of the other members of a coalition, such a mechanism may also offer a way to resolve the deadlocks that arise when agents lack this access \cite{Oh2019-hp}.

\addtolength{\textheight}{0cm}   





\section*{ACKNOWLEDGMENT}
Claude (Anthropic) \cite{claude} was used solely for language editing and grammar correction during manuscript preparation. All scientific content, interpretations, and conclusions were produced and verified by the authors.


\bibliographystyle{IEEEtran} 
\bibliography{root}   

\end{document}